\newcommand{\be}[0]{\begin{equation}}
\newcommand{\ee}[0]{\end{equation}}
\numberwithin{equation}{section}
\theoremstyle{plain}% default
\newtheorem{theorem}{Theorem}[section]
\theoremstyle{definition}
\newtheorem{defn}[theorem]{Definition}
\def\tensors3{T^3(E^3)}
\def\lintensors3{\mbox{Lin}(T^3(E^3))}
\begin{document}

\title[Ternary algebras]{Ternary algebras associated with irreducible tensor representations of SO(3) and quark model}
%%% Alternative title: Irreducible representations of SO(3) and cubic analog of Grassmann algebra
\author[Viktor Abramov]{Viktor Abramov}

\begin{abstract}
We show that each irreducible tensor representation of weight 2 of the rotation group of three-dimensional space in the space of rank 3 covariant tensors gives rise to an associative algebra with unity. We find the algebraic relations that the generators of these algebras must satisfy. Part of these relations has a form of binary relations and another part has a form of ternary relations. The structure of ternary relations is based on the cyclic group $\mathbb Z_3$ and a primitive cubic root of unity $q=\exp(2\pi i/3)$. The subspace of each algebra spanned by the triple products of generators is 5-dimensional and it is the space of an irreducible tensor representation of weight 2 of the rotation group $\mbox{SO}(3)$. We define a Hermitian scalar product in this 5-dimensional subspace and construct an orthonormal basis for it. Then we find the representation matrix of an infinitesimal rotation. We show that constructed algebras with binary and ternary relations can have applications in the quark model and Grand Unification Theories.
\end{abstract}

\maketitle
%%%%%%%%%%%%%%%%%%%%%%%%%%%%%%%%%%%%%%%%%%%%%%%%%%%%%%%%%%%%%%%%%%%%%%%%%%%%%%%%
%%%%%%%%%%%%%%%%%%%%%%%%%%%%%%%%%%%%%%%%%%%%%%%%%%%%%%%%%%%%%%%%%%%%%%%%%%%%%%%%
\section{Introduction and summary}
According to modern concepts of particle physics, the most fundamental particles that do not have an internal structure, that is, are indivisible, are quarks. The main quantum characteristics of quarks are fractional electric charge, fractional baryon number, half-integer spin, color charge, and flavor. Three generations of quarks are known, the first generation is "up" and "down" quarks, the second generation is "strange" and "charmed" quarks and the third generation is "top" and "bottom" quarks. Three quarks (or antiquarks) can combine into one particle, which is called a baryon. The color charge of a whole combination must be zero, that is, the allowed combinations of quarks are colorless. Quarks can also combine in two quark-antiquark, resulting in mesons. The most striking property of the quark model, which was established experimentally, is the so-called quark confinement, that is, a quark cannot be isolated and we can only observe them only in the combinations described above.

Fermions obey the well-known Pauli exclusion principle, which states that there cannot be two fermions in a quantum system with identical quantum characteristics. In the papers \cite{Kerner:Phys_Atom_Nucl_2017},\cite{Kerner:Universe_2019}, the author proposes, on a base of the properties of the quark model, that in the case of quarks, an analog of the Pauli exclusion principle could be formulated as follows: three quarks with completely identical quantum characteristics cannot coexist inside a nucleon. From this formulation, the author then obtains the algebraic properties of a wave function
\begin{eqnarray}
&&\Phi(x_i,x_j,x_k)+\Phi(x_j,x_k,x_i)+\Phi(x_k,x_i,x_j)\nonumber\\
&&\qquad\qquad +\Phi(x_k,x_j,x_i)+\Phi(x_j,x_i,x_k)+\Phi(x_i,x_k,x_i)=0,\label{intr:properties of wave function}
\end{eqnarray}
where $|x_1>,|x_2>,|x_3>$ are three different states and $i,j,k$ in (\ref{intr:properties of wave function}) can be any combination of integers $1,2,3$. In particular, a wave function $\Phi(x_i,x_j,x_k)$, which has $\mathbb Z_3$ symmetry
\begin{equation}
\Phi(x_i,x_j,x_k)=q\;\Phi(x_j,x_k,x_i)=q^2\;\Phi(x_k,x_i,x_j),
\end{equation}
where $q=\exp{(2\pi i/3)}$ is the primitive cube root of unity, satisfies equation (\ref{intr:properties of wave function}). A wave function with such symmetry justifies the introduction and study of a unital associative algebra generated by $\theta^1,\theta^2,\ldots,\theta^n$ that obey ternary relations
\begin{equation}
\theta^i\theta^j\theta^k=q\;\theta^j\theta^k\theta^i=q^2\;\theta^k\theta^i\theta^j.
\label{intr:ternary relations with q}
\end{equation}
From ternary relations (\ref{intr:ternary relations with q}) it follows that for any triple of integers $i,j,k$ from the set $\{1,2,3\}$ the generators of an algebra satisfy the relations
\begin{equation}
\theta^i\theta^j\theta^k+\theta^j\theta^k\theta^i+\theta^k\theta^i\theta^j=0,\;(\theta^i)^3=0
\end{equation}
and a product of any four generators vanishes. Here we can see that algebra with ternary relations (\ref{intr:ternary relations with q}) can be considered as 3rd order analog of Grassmann algebra. Indeed  generators $\xi^1,\xi^2,\ldots,\xi^n$ of Grassmann algebra satisfy the relations
$$
\xi^i\xi^j+\xi^j\xi^i=0,\;(\xi^i)^2=0.
$$
Algebra with relations (\ref{intr:ternary relations with q}), its various generalizations and applications was studied in \cite{Abramov_Kerner_LeRoy:J_Math_Phys_1997},\cite{Abramov_Kerner:J_Math_Phys_2000},
\cite{Abramov_Kerner_Liivapuu:Springer_2020},
\cite{Kerner:Acad_Sci_Paris_1991},\cite{Kerner:J_Math_Phys_1992},\cite{Kerner:Phys_Atom_Nucl_2017}.

In the present paper, we establish a connection between an algebra with ternary relations (\ref{intr:ternary relations with q}) and irreducible tensor representations of weight 2 of the rotations group of three-dimensional space $\mbox{SO}(3)$. In other words, we show that an algebra with ternary relations (\ref{intr:ternary relations with q}) naturally arises in tensor representations of the rotations group. One can naturally associate to three-dimensional Euclidean space $E$ the vector space of complex-valued covectors $T^1(E)$. A basis $e_1,e_2,e_3$ for $E$ gives rise for the dual basis $\vartheta^1,\vartheta^2,\vartheta^3$ in $T^1(E)$, where $\vartheta^i(e_j)=\delta^i_j$. Any covector can now be written in the form $t_i\vartheta^i$, where $t_i\in\mathbb C$ are components of complex-valued rank 1 covariant tensor. Basis covectors $\vartheta^1,\vartheta^2,\vartheta^3$ generate the tensor algebra of complex-valued covariant tensors $T_\vartheta(E)$, where $T_\vartheta(E)=\oplus_a\,T^a_\vartheta(E)$ and $T^a_\vartheta(E)$ is a subspace of rank $a$ covariant tensors, i.e. an element $t(\vartheta)$ of $T^a_\vartheta(E)$ has the form
$$
t(\vartheta)=t_{i_1i_2\ldots i_a}\vartheta^{i_1}\otimes\vartheta^{i_2}\otimes\ldots\otimes\vartheta^{i_a}.
$$
A rotation $e_i\to g^j_i\,e_j$ in $E$ induces the linear transformation $R_g$ in $T^a_\vartheta(E)$ by means of the formula
\begin{equation}
R_g(\vartheta^{i_1}\otimes\vartheta^{i_2}\otimes\ldots\otimes\vartheta^{i_a})=
  {\tilde g}_{j_1}^{i_1}{\tilde g}_{j_2}^{i_2}\ldots {\tilde g}_{j_a}^{i_a}\;
          \vartheta^{j_1}\otimes \vartheta^{j_2}\otimes\ldots\otimes\vartheta^{j_a},
\label{intr:representation of SO(3)}
\end{equation}
where $\tilde g$ is the inverse matrix of $g$. Hence we have a tensor representation of the rotations group.

Of particular interest to us are the irreducible tensor representations of the group of rotations in the space of rank 3 tensors. As is known, in this space the rotations group $\mbox{SO}(3)$ has representations of weight 0,1,2,3. Representation of weight 0 is an irreducible representation of the rotations group in the one-dimensional space of skew-symmetric tensors of rank 3. It is interesting to note that this space is highest dimensional non-trivial space of the Grassmann algebra of three-dimensional space. Of particular interest to us are the irreducible representations of weight 2. In \cite{Gelfand:2018} it is proved that the space of representations of weight 2 is the 10-dimensional space of traceless (with respect to any pair of subscripts) rank 3 tensors, which also satisfy
$$
t_{ijk}+t_{jki}+t_{kij}=0.
$$
In this space there are two irreducible representations of the rotations group, and any splitting of this 10-dimensional space (invariant under the action of the rotations group) into two 5-dimensional spaces gives two irreducible representations of the rotations group. In \cite{Gelfand:2018}, a method is proposed for splitting a 10-dimensional space into two 5-dimensional ones using additional conditions imposed on a tensor components. Condition
\begin{equation}
t_{ijk}=q\;t_{jki}=q^2\;t_{kij}
\label{intr:cyclic with q}
\end{equation}
determines one 5-dimensional subspace and condition
\begin{equation}
t_{ijk}=q^2\;t_{jki}=q\;t_{kij}
\label{intr:cyclic with q^2}
\end{equation}
the second. First, it is worth to mention that these conditions are invariant under the representation (\ref{intr:representation of SO(3)}) of the rotations group $\mbox{SO}(3)$. Second, in each 5-dimensional subspace we have irreducible unitary representation of $\mbox{SO}(3)$.

In \cite{Gelfand:2018}, the authors work with tensor components. In the present paper, we work with generators $\vartheta^1,\vartheta^2,\vartheta^3$ of tensor algebra. We find a set of ternary relations for the generators $\vartheta^1,\vartheta^2,\vartheta^3$ of tensor algebras that determine a 10-dimensional subspace of representations of weight 2. An analysis of these ternary relations shows that part of them can be obtained from binary relations. Thus, the study of tensor representations of weight 2 of the rotations group leads us to a unital associative algebra whose generators satisfy the relations
\begin{eqnarray}
&&\delta_{ij}\vartheta^i\vartheta^j=0,\qquad\qquad\qquad\qquad\;\;\;\;\mbox{(binary relations)},\nonumber\\
&&\vartheta^i\vartheta^j\vartheta^k+\vartheta^j\vartheta^k\vartheta^i+\vartheta^k\vartheta^i\vartheta^j=0,\;\; \mbox{(ternary relations)},\nonumber
\end{eqnarray}
which we call a ternary algebra associated with representations of weight 2 of the rotations group.
In analogy with (\ref{intr:ternary relations with q}) and (\ref{intr:cyclic with q^2}), we split this algebra into two algebras. The first algebra is generated by $\theta^1,\theta^2,\theta^3$, which are subjected to the relations
\begin{eqnarray}
&&\delta_{ij}\theta^i\theta^j=0,\;\;\;\;\;\;\qquad\qquad\qquad\qquad\;\;\; \mbox{(binary relations)}\nonumber\\
&&\theta^{i}\theta^{j}\theta^{k}
              =q\;\theta^{j}\vartheta^{k}\theta^{i},\qquad\qquad\qquad\quad\;\;\mbox{(ternary relations)},\nonumber
\end{eqnarray}
and we call this algebra $q$-algebra associated with irreducible representation of weight 2 of the rotations group. The second algebra is generated by $\bar\theta^1,\bar\theta^2,\bar\theta^3$, which are subjected to the set of relations
\begin{eqnarray}
&&\delta_{ij}\bar\theta^i\bar\theta^j=0,\;\;\;\;\;\;\qquad\qquad\qquad\qquad\;\;\; \mbox{(binary relations)}\nonumber\\
&&\bar\theta^{i}\bar\theta^{j}\bar\theta^{k}
              =\bar q\;\bar\theta^{j}\bar\vartheta^{k}\bar\theta^{i},\qquad\qquad\qquad\quad\;\;\mbox{(ternary relations)},\nonumber
\end{eqnarray}
and we call this algebra $\bar q$-algebra associated with irreducible representation of weight 2 of the rotations group.

We show that the subspace of $q$-algebra spanned by triple products $\theta^i\theta^j\theta^k$ is 5-dimensional and construct a basis for this subspace. Recall that in this 5-dimensional subspace we have an irreducible representation of $\mbox{SO}(3)$. We define Hermitian scalar product in this 5-dimensional subspace and find the explicit formula for $5\times 5$-matrix corresponding to infinitesimal rotation. This matrix is skew-Hermitian and traceless, thus we have the representation of Lie algebra $\mbox{so}(3)$ in $\mbox{so}(5)$. The structure of this matrix surprisingly coincides with the structure of representation $\bf 10$ of $\mbox{SU}(5)$ proposed by H. Georgi and S. Glashow in their GUT model for left-handed fermions of a single generation \cite{Croon_others:Front_Phys_2019},\cite{Georgi:Perseus_Books_1999}, \cite{Glashow:Nucl_Phys_1961}. We discuss this in the last section (Discussion).

In conclusion, we draw attention to binary relations $\delta_{ij}\theta^i\theta^j=0$ in all three algebras listed above. These relations arise naturally from the requirement of irreducible representations. On the other hand, these relations fit perfectly into the quark model and can be explained in terms of the quark model as follows. Let $R,G,B$ be color charges and $\bar R,\bar G,\bar B$ be anti-colors. Carrier particles of strong interactions (gluons) carry a color charge consisting of a pair, color and anti-color, where the colors themselves are different. Thus, the interaction changes the color charge of a quark. However, the colorless combination $R\bar R+B\bar B+G\bar G$ in no way interacts with quarks and therefore should vanish, i.e. we must have $R\bar R+B\bar B+G\bar G=0$. We think that this is the physical source of the binary relations of $q$-algebra and $\bar q$-algebra.
%%%%%%%%%%%%%%%%%%%%%%%%%%%%%%%%%%%%%%%%%%%%%%%%%%%%%%%%%%%%%%%%%%%%%%%%%%%%%%%%
%%%%%%%%%%%%%%%%%%%%%%%%%%%%%%%%%%%%%%%%%%%%%%%%%%%%%%%%%%%%%%%%%%%%%%%%%%%%%%%%
\section{Tensor representations of $\mbox{SO}(3)$}
We will consider irreducible representations of the group of rotations in three-dimensional Euclidean space $E$. The Euclidean metric of $E$ will be denoted by $<\;,\;>:E\times E\to \mathbb R$. We fix an orthonormal basis $e_1,e_2,e_3$ for $E$. Hence $<e_i,e_j>=\delta_{ij}$. For any vector $x\in E$ we have $x=x^i\,e_i$, where $x^1,x^2,x^3$ are coordinates of $x$. The rotations group $\mbox{SO}(3)$ of $E$ is the group of special orthogonal matrices $g=(g^i_j)$, where $g$ satisfies $g\,g^t=g^t\,g=I$ ($I$ is the third order unit matrix) and $\mbox{Det}\,g=1.$

Let $E^\ast$ be the dual space for $E$ and $\vartheta^1,\vartheta^2,\vartheta^3$ be the dual basis for $e_1,e_2,e_3$, i.e. $\vartheta^i(e_j)=\delta^i_j$. We will consider vector spaces of complex-valued covariant tensors in three-dimensional Euclidean space $E$, which transform by means of the rotations group, when we pass in $E$ from one Cartesian coordinate system to another. Let us denote the tensor algebra of complex-valued covariant tensors transforming with the help of $\mbox{SO}(3)$ by $T(E)$. This algebra is unital associative algebra over $\mathbb C$ and it is the direct sum of subspaces
\be
T(E)=\mathbb C\oplus T^1(E)\oplus T^2(E)\oplus T^3(E)\oplus\ldots,
\ee
where $T^a(E)$ is a subspace of tensors of rank $a$. The product of tensors $t=(t_{i_1i_2\ldots i_a}),s=(s_{i_1i_2\ldots i_a})$ of ranks $a$ and $b$ respectively is the tensor of rank $a+b$, whose components are
\be
(t\cdot s)_{i_1\ldots i_ai_{a+1}\ldots i_{a+b}}=
   t_{i_1\ldots i_a}\,s_{i_{a+1}\ldots i_{a+b}}.
\label{form:multiplication of tensors}
\ee
In a Cartesian coordinate system, determined by an orthonormal (oriented) basis  $e_1,e_2,e_3$, we can consider $\vartheta^1,\vartheta^2,\vartheta^3$ as generators of tensor algebra $T(E)$ and associate to each complex-valued covariant tensor of $a$th rank $t=(t_{i_1i_2\ldots i_a})$ the $a$th order polynomial $t(\vartheta)$, where
\be
t(\vartheta)=t_{i_1i_2\ldots i_a}\,\vartheta^{i_1}\otimes \vartheta^{i_2}\otimes\ldots\otimes\vartheta^{i_a}.
\label{form:invariantelement}
\ee
The multiplication of tensors (\ref{form:multiplication of tensors}) can be extended to polynomials (\ref{form:invariantelement}) as follows: the product of two homogeneous polynomials
\be
t(\vartheta)=t_{i_1i_2\ldots i_a}\,\vartheta^{i_1}\otimes \vartheta^{i_2}\otimes\ldots\otimes\vartheta^{i_a},\;\;
   s(\vartheta)=s_{j_1j_2\ldots j_b}\,\vartheta^{j_1}\otimes \vartheta^{j_2}\otimes\ldots\otimes\vartheta^{j_b},
\ee
is the polynomial
\be
t(\vartheta)\cdot s(\vartheta)=t_{i_1i_2\ldots i_a}s_{j_1j_2\ldots j_b}
       \,\vartheta^{i_1}\otimes \vartheta^{i_2}\otimes\ldots\otimes\vartheta^{i_a}\otimes
                \vartheta^{j_1}\otimes \vartheta^{j_2}\otimes\ldots\otimes\vartheta^{j_b}.
\label{form:multiplication theta expressions}
\ee
In the present paper, it will be convenient for us to distinguish between the algebra of covariant tensors $T(E)$ and the algebra of polynomials (\ref{form:invariantelement}), which will be denoted $T_\vartheta(E)$ and its subspace of homogeneous polynomials of $a$th order will be denoted by $T^a_\vartheta(E)$.

All possible monomials $\{\vartheta^{i_1}\otimes \vartheta^{i_2}\otimes\ldots\vartheta^{i_a}\}$ (in some way ordered), where $a=0,1,2,\ldots$ and $a=0$ is identified with the identity element $\mathbf{1}$ of the tensor algebra $T_\vartheta(E)$, form a basis for the vector space $T_\vartheta(E)$. Obviously the multiplication (\ref{form:multiplication theta expressions}) in terms of this basis is given by
\begin{eqnarray}
&&(\vartheta^{i_1}\otimes \vartheta^{i_2}\otimes\ldots\otimes\vartheta^{i_a})\cdot
                (\vartheta^{j_1}\otimes \vartheta^{j_2}\otimes\ldots\otimes\vartheta^{j_b})=\nonumber\\
                     &&\qquad\qquad\qquad\qquad\vartheta^{i_1}\otimes \vartheta^{i_2}\otimes\ldots\vartheta^{i_a}\otimes
                                     \vartheta^{j_1}\otimes \vartheta^{j_2}\otimes\ldots\otimes\vartheta^{j_b}.\nonumber
\end{eqnarray}
When we rotate the three-dimensional space $E$ by means of a special orthogonal matrix $g=(g^i_j)$, i.e. $e_i\to g^j_i\,e_j$, then components $t_{i_1i_2\ldots i_a}$ of an $a$th rank tensor $t$ transform according to the formula
\be
t_{i_1i_2\ldots i_a}\to g_{i_1}^{j_1}g_{i_2}^{j_2}\ldots g_{i_a}^{j_a}\;t_{j_1j_2\ldots j_a},
\label{form:transformation of components of tensor}
\ee
which determines a linear transformation in a vector space of rank $a$ tensors and, hence, a representation of the rotations group $\mbox{SO}(3)$ in $T^a(E)$. Monomials of the basis
$\vartheta^{i_1}\otimes \vartheta^{i_2}\otimes\ldots\vartheta^{i_a}$ transform under rotation $e_i\to g^j_i\,e_j$ according to the formula
\be
\vartheta^{i_1}\otimes \vartheta^{i_2}\otimes\ldots\otimes\vartheta^{i_a}\to
     \tilde{g}_{j_1}^{i_1}\tilde{g}_{j_2}^{i_2}\ldots \tilde{g}_{j_a}^{i_a}\;
          \vartheta^{j_1}\otimes \vartheta^{j_2}\otimes\ldots\otimes\vartheta^{j_a}
\label{form:transformation of monomials}
\ee
where $\tilde g=(\tilde g^i_j)$ is reciprocal matrix of $g$. If we now transform both tensor components and basic monomials in expression (\ref{form:invariantelement}), then the whole expression (\ref{form:invariantelement}) will not change, which expresses an invariance of elements of tensor algebra with respect to linear transformations of $E$.

However, in this paper we need to define an action of a linear operator induced by a special orthogonal matrix $g$ on the elements of tensor algebra $t(\vartheta)$. We denote a linear operator in a vector space $T^a_\vartheta(E)$ corresponding to $g\in\mbox{SO}(3)$ by $R_g$ and define its action on elements of tensor algebra as follows
\be
R_g\, \big(t(\vartheta)\big)=t_{i_1i_2\ldots i_a}\,R_g(\vartheta^{i_1}\otimes \vartheta^{i_2}\otimes\ldots\otimes\vartheta^{i_a}),
\ee
where
\be
R_g(\vartheta^{i_1}\otimes \vartheta^{i_2}\otimes\ldots\vartheta^{i_a})=
  {g}_{j_1}^{i_1}{g}_{j_2}^{i_2}\ldots {g}_{j_a}^{i_a}\;
          \vartheta^{j_1}\otimes \vartheta^{j_2}\otimes\ldots\otimes\vartheta^{j_a}.
\label{form:representation of SO(3) on varthetas}
\ee
The main construction that we will consider in this paper can be described in general terms as follows. Suppose $S$  is a linear operator constructed by means of operator $R_g$ and related with irreducible representations of $\mbox{SO}(3)$. Equation
\be
S\big(t(\vartheta)\big)=0,
\label{form:general equation}
\ee
defines the subspace in a vector space $T^a_\vartheta(E)$. This subspace can be described in two equivalent ways. The first one is that we consider all basic monomials to be independent, that is, we assume that there are no algebraic relations between them, and then we collect all terms in the left-hand side of (\ref{form:general equation}) containing the same basic monomial and put its coefficient, which is a linear combination of tensor components, equal to zero. Thus, we obtain a system of linear equations for the components of a tensor and the set of solutions of this system of equations will determine a subspace in $T^a(E)$.

An alternative way to describe the same subspace is to assume that all components of a tensor are independent, collect the terms on the left-hand side of equation (\ref{form:general equation}) containing the same component of a tensor, and set the coefficient of this component to zero. Since this coefficient will be a linear combination of monomials, we obtain algebraic relations between the $a$th order products of generators. The next problem that we solve in this paper is the selection of a system of independent relations. In addition, we consider an interesting question whether the obtained independent algebraic relations of $a$th order (or a part of them) can be obtained from relations of order $a-1$.

It is well known that when considering representations of the rotations group in three-dimensional space, it suffices to restrict ourselves to unitary representations, because one can always introduce a scalar product in the representation space with the help of invariant integration over the rotations group in such a way that the representation becomes unitary. In the case of a vector space of tensors of rank $a$ a Hermitian scalar product can be defined as follows
\be
<t(\vartheta),s(\vartheta)>=
                  \sum_{i_1i_2\ldots i_a}t_{i_1i_2\ldots i_a}\,\bar{s}_{i_1i_2\ldots i_a}.
\ee
Hence the basic monomials $\vartheta^{i_1}\otimes \vartheta^{i_2}\otimes\ldots\otimes\vartheta^{i_a}$ are orthonormal, i.e.
\be
<\vartheta^{i_1}\otimes \vartheta^{i_2}\otimes\ldots\otimes\vartheta^{i_a},
                                 \vartheta^{j_1}\otimes \vartheta^{j_2}\otimes\ldots\otimes\vartheta^{j_a}>=
    \delta^{i_1j_1}\,\delta^{i_2j_2}\ldots\,\delta^{i_aj_a}.
\ee
It is easy to show that a linear transformation $R_g$ for any $g\in \mbox{SO}(3)$ is now unitary. For a special orthogonal matrix $g=(g^i_j)$ we have
$$
g^i_k\,\delta^{kl}\,g^j_l=\delta^{ij}.
$$
Thus
\begin{eqnarray}
&& <R_g(\vartheta^{i_1}\otimes\ldots\otimes\vartheta^{i_a}),R_g(
                     \vartheta^{j_1}\otimes\ldots\otimes\vartheta^{j_a}>\nonumber\\
&& \qquad=g^{i_1}_{k_1}\ldots g^{i_a}_{k_a}g^{j_1}_{l_1}\ldots g^{j_a}_{l_a}
         <\vartheta^{k_1}\otimes\ldots\otimes\vartheta^{k_a},\vartheta^{l_1}\otimes\ldots\otimes\vartheta^{l_a}>\nonumber\\
&&\qquad =g^{i_1}_{k_1}\delta^{k_1l_1}g^{j_1}_{l_1}\ldots g^{i_a}_{k_a}\delta^{k_al_a}g^{j_a}_{l_a}
          =\delta^{i_1j_1}\ldots \delta^{i_aj_a}\nonumber\\
&&\qquad=<\vartheta^{i_1}\otimes\ldots\otimes\vartheta^{i_a},\vartheta^{j_1}\otimes\ldots\otimes\vartheta^{j_a}>.\nonumber
\end{eqnarray}
From this it follows that $R:\mbox{SO}(3)\to \mbox{U}(T^a_\vartheta(E))$, where $\mbox{U}(T^a_\vartheta(E))$ is the group of unitary linear transformations of a vector space $T^a_\vartheta(E)$.

In conclusion of the first section, we simplify the notations that we will use in this paper. The symbol of the tensor product $\otimes$ requires a lot of space in a text. Therefore, in what follows we will omit it and write $\vartheta^{i_1}\vartheta^{i_2}\ldots\vartheta^{i_a}$ instead of $\vartheta^{i_1}\otimes\vartheta^{i_2}\otimes\ldots\otimes\vartheta^{i_a}$.
%%%%%%%%%%%%%%%%%%%%%%%%%%
%%%%%%%%%%%%%%%%%%%%%%%%%%
\section{Algebra associated with tensor representations of weight 2 of $\mbox{SO}(3)$}
%%%%%%%%%%%%%%%%%%%%%%%%%%
In this section, we will study algebraic relations between generators $\vartheta^1,\vartheta^2,\vartheta^3$ of tensor algebra $T_\vartheta(E)$, which arise in relation with irreducible representations of weight 2 of the rotations group in the space of rank 3 tensors $T^3_\vartheta(E)$.

Recall that the Lie algebra $\mbox{so}(3)$ of the rotations group $\mbox{SO}(3)$ is the vector space of skew-symmetric matrices of 3rd order equipped with the Lie bracket, which is the usual commutator of two matrices. The matrices
\be
A_1=\left(
      \begin{array}{ccc}
        0 & 0 & 0 \\
        0 & 0 & -1 \\
        0 & 1 & 0 \\
      \end{array}
    \right),
A_2=\left(
      \begin{array}{ccc}
        0 & 0 & 1 \\
        0 & 0 & 0 \\
        -1 & 0 & 0 \\
      \end{array}
    \right),
A_3=\left(
      \begin{array}{ccc}
        0 & -1 & 0 \\
        1 & 0 & 0 \\
        0 & 0 & 0 \\
      \end{array}
    \right).\nonumber
\ee
can be taken as a basis for $\mbox{so}(3)$.
Then $[A_i,A_j]=\epsilon_{ijk}\,A_k$, where $\epsilon_{ijk}$ is the Levi-Civita tensor. The matrices $A_1$,$A_2$,$A_3$ generate infinitesimal rotations around coordinate axes $x^1,x^2,x^3$ respectively.

A representation of the rotations group $\mbox{SO}(3)$ in the vector space of rank 3 tensors $T^3_\vartheta(E)$, defined by the formula (\ref{form:representation of SO(3) on varthetas}), is denoted by $R:\mbox{SO}(3)\to \mbox{U}(T^3_\vartheta(E))$. At the infinitesimal level, we have a representation of the Lie algebra $\mbox{so}(3)$. The corresponding mapping, which assigns to each matrix of $\mbox{so}(3)$ the linear transformation in the vector space of rank 3 tensors $T^3_\vartheta(E)$, will be denoted by $dR$, and for any matrix $F\in \mbox{so}(3)$ its image (linear transformation in $T^3_\vartheta(E)$) will be denoted by $dR_F$. Obviously $dR$ is tangent mapping for $R$ at the unit of the rotations group $\mbox{SO}(3)$.

In \cite{Gelfand:2018} it is proved that all irreducible tensor representations of the rotations group $\mbox{SO}(3)$ in the vector space of rank 3 tensors can be described by means of the linear operator
\be
H^2=-(dR_{A_1})^2-(dR_{A_2})^2-(dR_{A_3})^2.
\ee
More precisely, this can be formulated as follows: the solutions of the equation
\be
H^2\,t(\vartheta)=l(l+1)\,t(\vartheta),
\label{form:basic equation}
\ee
where
\be
t(\vartheta)=t_{ijk}\;\vartheta^i\,\vartheta^j\,\vartheta^k,
\ee
determine the subspace of rank 3 tensors, where we have a multiple irreducible representation of weight $l$ of the rotations group , i.e. this representation is either irreducible or can be split into irreducible representations. In particular, we note that the representation of weight $l=0$ is irreducible and the representation space is the one-dimensional space of skew-symmetric tensors of rank 3. Here we see a connection with the Grassmann algebra and the theory of differential forms (skew-symmetric tensor fields) in three-dimensional space.

In this paper, we are interested in the representations of weight $l=2$. In \cite{Gelfand:2018}, this representation is described and it is shown there that the representation space is 10-dimensional and in this space we have a double irreducible representation of $\mbox{SO}(3)$, i.e. one can decompose it into two irreducible representations by splitting the 10-dimensional representation space into two 5-dimensional subspaces invariant under representation of $\mbox{SO}(3)$. We will get the same results, but in contrast to the approach of \cite{Gelfand:2018}, where the authors use components of tensors and the operator $H^2$ in the equation (\ref{form:basic equation}) acts on components $t_{ijk}$ by means of infinitesimal version of (\ref{form:transformation of components of tensor}), we will use the generators $\vartheta^1,\vartheta^2,\vartheta^3$ of tensor algebra and the operator $H^2$ will be applied to triple products $\vartheta^i\,\vartheta^j\,\vartheta^k$ and this will lead us to an interesting algebraic structure.

First of all, we find a linear transformation $dR_F$ of the vector space of rank 3 tensors for any skew-symmetric matrix $F=(F^i_j)\in \mbox{so}(3)$. Each skew-symmetric matrix $F$ generates the one-parameter subgroup $g_\tau=\exp(\tau\,F)$ of the rotations group $\mbox{SO}(3)$.
Expanding $g_\tau$ in a power series in $\tau$ in (\ref{form:representation of SO(3) on varthetas}) and selecting terms linear in $\tau$, we obtain
\begin{eqnarray}
R_{g_\tau}(\vartheta^i\,\vartheta^j\,\vartheta^k) &=& (\delta^i_m+F^i_m\,\tau+\ldots)(\delta^j_n+F^i_n\,\tau+\ldots)
          (\delta^k_p+F^k_p\,\tau+\ldots)\nonumber\\
          &&\qquad\qquad\qquad\qquad\qquad\qquad\qquad\qquad\qquad\quad \times\vartheta^m\vartheta^n\vartheta^p\nonumber\\
    &=& \vartheta^i\,\vartheta^j\,\vartheta^k+\big(F^i_p\vartheta^p\vartheta^j\vartheta^k
                                                          +F^j_p\vartheta^i\vartheta^p\vartheta^k+
                       F^k_p\vartheta^i\vartheta^j\vartheta^p\big)\,\tau+\ldots.\nonumber
\end{eqnarray}
Hence, the linear transformation $dR_F$ in the vector space of rank 3 tensors, generated by a matrix $F\in\mbox{so}(3)$, can be written as follows
\be
dR_F(\vartheta^i\,\vartheta^j\,\vartheta^k)=
       F^i_p\vartheta^p\vartheta^j\vartheta^k+F^j_p\vartheta^i\vartheta^p\vartheta^k+
                                                      F^k_p\vartheta^i\vartheta^j\vartheta^p,
\ee
or
\be
dR_F(t(\vartheta))=t_{ijk}(F^i_p\vartheta^p\vartheta^j\vartheta^k+F^j_p\vartheta^i\vartheta^p\vartheta^k+
                       F^k_p\vartheta^i\vartheta^j\vartheta^p).
\label{form:formula for dR_F}
\ee
Since the linear operator at the left-hand side of the basic equation (\ref{form:basic equation}) is the sum of squares of linear operators, generated by infinitesimal rotations $A_1,A_2,A_3$, we apply the found formula twice and get
\begin{eqnarray}
&&(dR_F)^2\,t(\vartheta)=t_{ijk}\big((F^2)^i_p\,\vartheta^p\vartheta^j\vartheta^k+
                                    (F^2)^j_p\,\vartheta^i\vartheta^p\vartheta^k+
                                       (F^2)^k_p\,\vartheta^i\vartheta^j\vartheta^p\label{form:square of dR_F}\\
         &&\qquad\qquad\qquad\qquad+2\,F^i_pF^j_r\vartheta^p\vartheta^r\vartheta^k+
            2\,F^i_pF^k_r\vartheta^p\vartheta^j\vartheta^r+
                2\,F^j_pF^k_r\vartheta^i\vartheta^p\vartheta^r\big).\nonumber
\end{eqnarray}
Now we can calculate the operator $H^2=-(dR_{A_1})^2-(dR_{A_2})^2-(dR_{A_3})^2$ successively substituting $A_1,A_2,A_3$ into place of $F$. We obtain
$$
-(dR_{A_1})^2t(\vartheta)=\mathfrak R^\prime_1+\mathfrak R^{\prime\prime}_1-\mathfrak R^{\prime\prime\prime}_1,
$$
where
\begin{eqnarray}
\mathfrak R^\prime_1 \!\!\!\!\!&=&\!\!\!\!\! (t_{2jk}\vartheta^2+t_{3jk}\vartheta^3)\vartheta^j\vartheta^k+
                            \vartheta^j(t_{j2k}\vartheta^2+t_{j3k}\vartheta^3)\vartheta^k+
                              \vartheta^j\vartheta^k(t_{jk2}\vartheta^2+t_{jk3}\vartheta^3),\nonumber\\
\mathfrak R^{\prime\prime}_1 \!\!\!\!\!&=&\!\!\!\!\! (t_{23k}\vartheta^3\vartheta^2+t_{32k}\vartheta^2\vartheta^3)\vartheta^k+
   t_{2k3}\vartheta^3\vartheta^k\vartheta^2+t_{3k2}\vartheta^2\vartheta^k\vartheta^3\nonumber\\
   &&\qquad\qquad\qquad\qquad\qquad\qquad\qquad\qquad\qquad
       +\vartheta^k(t_{k23}\vartheta^3\vartheta^2+t_{k32}\vartheta^2\vartheta^3)\nonumber\\
\mathfrak R^{\prime\prime\prime}_1\!\!\!\!\!&=&\!\!\!\!\!
   \big(t_{22k}(\vartheta^3)^2+t_{33k}(\vartheta^2)^2\big)\vartheta^k+
     t_{2j2}\vartheta^3\vartheta^j\vartheta^3+t_{3j3}\vartheta^2\vartheta^j\vartheta^2\nonumber\\
     &&\qquad\qquad\qquad\qquad\qquad\qquad\qquad\qquad\qquad
       +\vartheta^k\big(t_{k33}(\vartheta^2)^2+t_{k22}(\vartheta^3)^2\big).\nonumber
\end{eqnarray}
Similarly we obtain
\be
-(dR_{A_2})^2t(\vartheta)=\mathfrak R^\prime_2+\mathfrak R^{\prime\prime}_2-\mathfrak R^{\prime\prime\prime}_2,\;\;
  -(dR_{A_3})^2t(\vartheta)=\mathfrak R^\prime_3+\mathfrak R^{\prime\prime}_3-\mathfrak R^{\prime\prime\prime}_3,
\ee
where
\begin{eqnarray}
\mathfrak R^\prime_2 \!\!\!\!\!&=&\!\!\!\!\! (t_{3jk}\vartheta^3+t_{1jk}\vartheta^1)\vartheta^j\vartheta^k+
                            \vartheta^j(t_{j3k}\vartheta^3+t_{j1k}\vartheta^1)\vartheta^k+
                              \vartheta^j\vartheta^k(t_{jk1}\vartheta^1+t_{jk3}\vartheta^3),\nonumber\\
\mathfrak R^{\prime\prime}_2 \!\!\!\!\!&=&\!\!\!\!\! (t_{31k}\vartheta^1\vartheta^3+t_{13k}\vartheta^3\vartheta^1)\vartheta^k+
   t_{3k1}\vartheta^1\vartheta^k\vartheta^3+t_{1k3}\vartheta^3\vartheta^k\vartheta^1\nonumber\\
   &&\qquad\qquad\qquad\qquad\qquad\qquad\qquad\qquad\qquad
       +\vartheta^k(t_{k31}\vartheta^1\vartheta^3+t_{k13}\vartheta^3\vartheta^1)\nonumber\\
\mathfrak R^{\prime\prime\prime}_2\!\!\!\!\!&=&\!\!\!\!\!
   \big(t_{33k}(\vartheta^1)^2+t_{11k}(\vartheta^3)^2\big)\vartheta^k+
     t_{3j3}\vartheta^1\vartheta^j\vartheta^1+t_{1j1}\vartheta^3\vartheta^j\vartheta^3\nonumber\\
     &&\qquad\qquad\qquad\qquad\qquad\qquad\qquad\qquad\qquad
       +\vartheta^k\big(t_{k11}(\vartheta^3)^2+t_{k33}(\vartheta^1)^2\big),\nonumber
\end{eqnarray}
and
\begin{eqnarray}
\mathfrak R^\prime_3 \!\!\!\!\!&=&\!\!\!\!\! (t_{1jk}\vartheta^1+t_{2jk}\vartheta^2)\vartheta^j\vartheta^k+
                            \vartheta^j(t_{j1k}\vartheta^1+t_{j2k}\vartheta^2)\vartheta^k+
                              \vartheta^j\vartheta^k(t_{jk1}\vartheta^1+t_{jk2}\vartheta^2),\nonumber\\
\mathfrak R^{\prime\prime}_3 \!\!\!\!\!&=&\!\!\!\!\! (t_{12k}\vartheta^2\vartheta^1+t_{21k}\vartheta^1\vartheta^2)\vartheta^k+
   t_{1j2}\vartheta^2\vartheta^j\vartheta^1+t_{2j1}\vartheta^1\vartheta^j\vartheta^2\nonumber\\
   &&\qquad\qquad\qquad\qquad\qquad\qquad\qquad\qquad\qquad
       +\vartheta^k(t_{k12}\vartheta^2\vartheta^1+t_{k21}\vartheta^1\vartheta^2)\nonumber\\
\mathfrak R^{\prime\prime\prime}_3\!\!\!\!\!&=&\!\!\!\!\!
   \big(t_{11k}(\vartheta^2)^2+t_{22k}(\vartheta^1)^2\big)\vartheta^k+
     t_{1j1}\vartheta^2\vartheta^j\vartheta^2+t_{2j2}\vartheta^1\vartheta^j\vartheta^1\nonumber\\
     &&\qquad\qquad\qquad\qquad\qquad\qquad\qquad\qquad\qquad
       +\vartheta^k\big(t_{k22}(\vartheta^1)^2+t_{k11}(\vartheta^2)^2\big).\nonumber
\end{eqnarray}
Since we are considering the tensor representations of weight 2, in the basic equation (\ref{form:basic equation}) $l$ is 2 and in this case the equation takes on the form
\be
H^2\,t(\vartheta)=6\,t(\vartheta).
\label{form:particular case of basic equation l=2}
\ee
It should be noted here that if in this equation we consider a tensor $t$ rather than an element of tensor algebra $t(\vartheta)$, then the solution of the equation is a subspace of rank 3 tensors that satisfy the conditions
\be
t_{ijk}+t_{jki}+t_{kij}=0,\;\;\sum_{i=1}^3\;t_{iij}=0,\;\sum_{i=1}^3\;t_{iji}=0,\;
                   \sum_{i=1}^3\;t_{jii}=0,
\label{form:subspace of traceless tensors}
\ee
where $j=1,2,3$, i.e. the trace of tensor $t$ for any pair of subscripts is zero. The subspace of rank 3 tensors, which is determined by (\ref{form:subspace of traceless tensors}), will be denoted by $\frak T^3(E)$. Thus $\frak T^3(E)\subset T^3(E)$.

Let us now see what kind of algebraic structure this equation will lead to if we substitute the corresponding element of tensor algebra instead of the tensor.
On the left-hand side of this equation, the terms denoted by $\mathfrak R^\prime_1,\mathfrak R^\prime_2,\mathfrak R^\prime_3$ add up to $2\,t(\vartheta)+2\,t(\vartheta)+2\,t(\vartheta)=6\,t(\vartheta)$. Thus, the basic equation (\ref{form:particular case of basic equation l=2}) takes on the form
\be
\sum_{i=1}^3\;(\mathfrak R^{\prime\prime}_i-\mathfrak R^{\prime\prime\prime}_i)=0.
\label{form:equation for varthetas}
\ee
Equation (\ref{form:equation for varthetas}) can be viewed from two different points of view. The first is that we consider all products $\vartheta^i\vartheta^j\vartheta^k$ to be independent, that is, there are no relations between the generators and their products. Then we collect all the terms that contain a triple product of the generators $\vartheta^i\,\vartheta^j\,\vartheta^k$. Taking this product out of brackets, we get a linear combination of components of tensor $t$ and, equating it to zero, we obtain the linear equation. Having done this for all products $\vartheta^i\,\vartheta^j\,\vartheta^k$, we obviously get a system of linear equations (\ref{form:subspace of traceless tensors}), which determines the subspace of rank 3 tensors with double irreducible representation of the rotations group.

Another point of view on equation (\ref{form:equation for varthetas}) is that we consider the components of a tensor $t$ as independent. Then, collecting all terms that contain some tensor component $t_{ijk}$ and taking this component out of brackets, we get a linear combination of triple products of generators $\vartheta^i\vartheta^j\vartheta^k$. Equating all combinations obtained in this way to zero, we obtain a set of algebraic relations that define some subalgebra of the tensor algebra $T_\vartheta(E)$. In this paper, we stand on the second point of view on equation (\ref{form:equation for varthetas}).

Let $S_3$ be the group of permutations of integers $(1,2,3)$. We will use sums of the form $\sum_{S_3}$, where under the sign of the sum there will be quantities depending on three indices $i,j,k$. This means that the sum is taken over all permutations of integers 1,2,3, that is, the triple $(i,j,k)$ runs through all six permutations. Now collecting terms that contain one and the same tensor component, we can write the left-hand side of the equation (\ref{form:equation for varthetas}) in the following form
\be
\sum_{i=1}^3\;(\mathfrak R^{\prime\prime}_i-\mathfrak R^{\prime\prime\prime}_i)=\mathfrak S_1+\mathfrak S_2+\mathfrak S_3,
\ee
where
\begin{eqnarray}
\mathfrak S_1 \!\!\!\!\!&=&\!\!\!\!\! \sum_{S_3}\;t_{ikj}(\vartheta^i\vartheta^j\vartheta^k+\vartheta^j\vartheta^k\vartheta^i+
     \vartheta^k\vartheta^i\vartheta^j)\nonumber\\
\mathfrak S_2 \!\!\!\!\!&=&\!\!\!\!\! \sum_{S_3}  \big(t_{jij}(\vartheta^i\vartheta^j\vartheta^j+\vartheta^j\vartheta^j\vartheta^i-(\vartheta^i)^3-\vartheta^k\vartheta^i\vartheta^k)\nonumber\\
 &&\qquad+t_{ijj}(\vartheta^j\vartheta^j\vartheta^i+\vartheta^j\vartheta^i\vartheta^j-(\vartheta^i)^3-\vartheta^i\vartheta^k\vartheta^k)\nonumber\\
   &&\qquad\qquad+t_{jji}(\vartheta^j\vartheta^i\vartheta^j+\vartheta^i\vartheta^j\vartheta^j-(\vartheta^i)^3-\vartheta^k\vartheta^k\vartheta^i)\big)\nonumber\\
\mathfrak S_3 \!\!\!\!\!&=&\!\!\!\!\! \frac{1}{2}\sum_{S_3}
  t_{iii}(\vartheta^j\vartheta^j\vartheta^i+\vartheta^j\vartheta^i\vartheta^j+\vartheta^i\vartheta^j\vartheta^j\nonumber\\
  &&\qquad\qquad\qquad\qquad \vartheta^k\vartheta^k\vartheta^i+\vartheta^k\vartheta^i\vartheta^k+\vartheta^i\vartheta^k\vartheta^k).\nonumber
\end{eqnarray}
The expression $\mathfrak S_1$ contains all terms with components $t_{ijk}$, where $i\neq j,j\neq k, i\neq k$. It is easy to see that this kind of terms we can find only in the expressions $\mathfrak R^{\prime\prime}_1,\mathfrak R^{\prime\prime}_2,\mathfrak R^{\prime\prime}_3$, when we put $k=1$ in $\mathfrak R^{\prime\prime}_1$, $k=2$ in $\mathfrak R^{\prime\prime}_2$ and $k=3$ in $\mathfrak R^{\prime\prime}_3$. Collecting all these terms, we get
the expression $\mathfrak S_1$. Hence the six components $t_{ijk}$ of tensor $t$, where $(i,j,k)$ is a permutation of $1,2,3$, give us six relations
\be
\vartheta^i\vartheta^k\vartheta^j+\vartheta^k\vartheta^j\vartheta^i+\vartheta^j\vartheta^i\vartheta^k=0,\;\;
     i\neq j,\;j\neq k,\;i\neq k.
\label{form:cyclic relations i,j,k}
\ee
To make formulas more compact, we introduce a notation for the sum of ternary products of generators obtained by cyclic permutation of generators. Let us denote
\be
\{\vartheta^i,\vartheta^k,\vartheta^j\}=
   \vartheta^i\vartheta^k\vartheta^j+\vartheta^k\vartheta^j\vartheta^i+\vartheta^j\vartheta^i\vartheta^k.
\label{form:definition of ternary curl brackets}
\ee
Now we can write the six relations (\ref{form:cyclic relations i,j,k}) by means of (\ref{form:definition of ternary curl brackets}) in a compact form as follows
\be
\{\vartheta^i,\vartheta^k,\vartheta^j\}=0,\;\;i\neq j,\;j\neq k,\;i\neq k.
\ee
Thus, we can formulate the first set of algebraic relations as follows: for three different generators, the sum of the products of all three, where each product is obtained from the previous one by cyclic permutation of the factors, is zero.

In order to write compactly the following relations, which we obtain from expression $\mathfrak S_2$, we denote $\vartheta^{ijk}=\vartheta^i\vartheta^j\vartheta^k$. The expression $\mathfrak S_2$ gives 18 relations for triple products of generators, which can be split into following six sets of relations
\begin{eqnarray}
&& \;\;\mbox{\bf I}\;(i=1,j=2,k=3)\qquad\qquad\quad \mbox{\bf IV}\;(i=2,j=1,k=3)\nonumber\\
&&\!\!\!\!\!\!\!\!\vartheta^{122}+\vartheta^{221}-\vartheta^{111}-\vartheta^{313}=0,\;\;\;\;\;
   \vartheta^{211}+\vartheta^{112}-\vartheta^{222}-\vartheta^{323}=0,\nonumber\\
&&\!\!\!\!\!\!\!\!\vartheta^{221}+\vartheta^{212}-\vartheta^{111}-\vartheta^{133}=0,\;\;\;\;\;
   \vartheta^{112}+\vartheta^{121}-\vartheta^{222}-\vartheta^{233}=0,\nonumber\\
&&\!\!\!\!\!\!\!\!\vartheta^{212}+\vartheta^{122}-\vartheta^{111}-\vartheta^{331}=0,\;\;\;\;\;
   \vartheta^{121}+\vartheta^{211}-\vartheta^{222}-\vartheta^{332}=0,\nonumber\\
&&\nonumber\\
&& \;\;\mbox{\bf II}\;(i=1,j=3,k=2)\qquad\qquad\quad \mbox{\bf V}\;(i=3,j=1,k=2)\nonumber\\
&&\!\!\!\!\!\!\!\!\vartheta^{133}+\vartheta^{331}-\vartheta^{111}-\vartheta^{212}=0,\;\;\;\;\;
   \vartheta^{311}+\vartheta^{113}-\vartheta^{333}-\vartheta^{232}=0,\nonumber\\
&&\!\!\!\!\!\!\!\!\vartheta^{331}+\vartheta^{313}-\vartheta^{111}-\vartheta^{122}=0,\;\;\;\;\;
   \vartheta^{113}+\vartheta^{131}-\vartheta^{333}-\vartheta^{322}=0,\nonumber\\
&&\!\!\!\!\!\!\!\!\vartheta^{313}+\vartheta^{133}-\vartheta^{111}-\vartheta^{221}=0,\;\;\;\;\;
   \vartheta^{131}+\vartheta^{311}-\vartheta^{333}-\vartheta^{223}=0,\nonumber\\
&&\nonumber\\
&& \;\;\mbox{\bf III}\;(i=2,j=3,k=1)\qquad\qquad\quad \mbox{\bf VI}\;(i=3,j=2,k=1)\nonumber\\
&&\!\!\!\!\!\!\!\!\vartheta^{233}+\vartheta^{332}-\vartheta^{222}-\vartheta^{121}=0,\;\;\;\;\;
   \vartheta^{322}+\vartheta^{223}-\vartheta^{333}-\vartheta^{131}=0,\nonumber\\
&&\!\!\!\!\!\!\!\!\vartheta^{332}+\vartheta^{323}-\vartheta^{222}-\vartheta^{211}=0,\;\;\;\;\;
   \vartheta^{223}+\vartheta^{232}-\vartheta^{333}-\vartheta^{311}=0,\nonumber\\
&&\!\!\!\!\!\!\!\!\vartheta^{323}+\vartheta^{233}-\vartheta^{222}-\vartheta^{112}=0,\;\;\;\;\;
   \vartheta^{232}+\vartheta^{322}-\vartheta^{333}-\vartheta^{113}=0,\nonumber
\end{eqnarray}
Finally, the expression $\mathfrak S_3$ gives three relations, which can be written as follows
\begin{eqnarray}
\{\vartheta^3,\vartheta^3,\vartheta^1\}+\{\vartheta^1,\vartheta^2,\vartheta^2\} &=& 0,\label{form:third group relations 1}\\
\{\vartheta^3,\vartheta^3,\vartheta^2\}+\{\vartheta^2,\vartheta^1,\vartheta^1\} &=& 0,\label{form:third group relations 2}\\
\{\vartheta^2,\vartheta^2,\vartheta^3\}+\{\vartheta^3,\vartheta^1,\vartheta^1\} &=& 0.\label{form:third group relations 3}
\end{eqnarray}
We obtained 27 relations and this result is consistent with our approach. Indeed, we consider all components of a tensor to be independent, in the case of a rank 3 tensor their number is 27 and, since each component gives one relation, we obtained all the relations.

Our next goal is to find a set of defining (independent) relations such that the above 27 relations would follow from them. First of all, we add up all relations in groups {\bf I} and {\bf II}. We get
$$
\{\vartheta^3,\vartheta^3,\vartheta^1\}+\{\vartheta^1,\vartheta^2,\vartheta^2\} = 6\,(\vartheta^1)^3.
$$
But, according to the relation (\ref{form:third group relations 1}), the left-hand side of the above relation is zero. Thus $(\vartheta^1)^3=0$. Similarly adding up all relations in groups {\bf III, IV} ({\bf V,VI}) and applying the relation (\ref{form:third group relations 2}) \big((\ref{form:third group relations 3})\big) we get $(\vartheta^2)^3=(\vartheta^3)^3=0$. Thus, we proved that from the set of relations {\bf I} - {\bf VI} and (\ref{form:third group relations 1}) - (\ref{form:third group relations 3}) it follows that the cube of each generator is 0, i.e.
$$
(\vartheta^1)^3=0,\;\;\;(\vartheta^2)^3=0,\;\;\;(\vartheta^3)^3=0.
$$
Next, we add up all relations of group {\bf I} and, taking into account that $(\vartheta^1)^3=0$, we obtain
$$
\{\vartheta^3,\vartheta^3,\vartheta^1\}= 2\,\{\vartheta^1,\vartheta^2,\vartheta^2\}.
$$
Substituting this result into (\ref{form:third group relations 1}), we obtain
$$
3\,\{\vartheta^1,\vartheta^2,\vartheta^2\}=0,
$$
and, consequently, $\{\vartheta^1,\vartheta^2,\vartheta^2\}=\{\vartheta^3,\vartheta^3,\vartheta^1\}=0$. Applying the same reasoning to the set of relations {\bf II} - {\bf VI}, we get analogous relations. Thus, summing up the still not final result of our considerations, we can collect together the obtained relations
\begin{eqnarray}
&&\{\vartheta^1,\vartheta^2,\vartheta^3\}=0,\;\;\;\{\vartheta^3,\vartheta^2,\vartheta^1\}=0,\label{form:cyclic relations 1}\\
&&\{\vartheta^1,\vartheta^1,\vartheta^2\}=0,\;\;\;\{\vartheta^2,\vartheta^2,\vartheta^1\}=0,\label{form:cyclic relations 2}\\
&&\{\vartheta^1,\vartheta^1,\vartheta^3\}=0,\;\;\;\{\vartheta^3,\vartheta^3,\vartheta^1\}=0,\label{form:cyclic relations 3}\\
&&\{\vartheta^2,\vartheta^2,\vartheta^3\}=0,\;\;\;\{\vartheta^3,\vartheta^3,\vartheta^2\}=0,\label{form:cyclic relations 4}\\
&&\;(\vartheta^1)^3=0,\;\;\;(\vartheta^2)^3=0,\;\;\;(\vartheta^3)^3=0.\label{form:cyclic relations 5}
\end{eqnarray}
It is worth to mention that all these relations are independent, i.e. none of them can be obtained as a consequence of the rest. It should also be noted that all these relations can be summarized as follows: the sum of triple products of generators, obtained by cyclic permutations of factors from any triple product of generators $\vartheta^1,\vartheta^2,\vartheta^3$, is zero, i.e.
\be
\{\vartheta^i,\vartheta^j,\vartheta^k\}=0,
\ee
for any triple $i,j,k$ of integers 1,2,3.

However, the set of relations (\ref{form:cyclic relations 1}) - (\ref{form:cyclic relations 5}) is not equivalent to the initial set of 27 relations. It is easy to see that one can not get any relation in {\bf I} - {\bf VI} with the help of relations (\ref{form:cyclic relations 1}) - (\ref{form:cyclic relations 5}). This requires additional relations.

We will show how a part of these additional relations can be deduced from the relations {\bf I}, for the other sets of relations this is done similarly. As $(\vartheta^1)^3=0$, the relations {\bf I} can be written as follows
\begin{eqnarray}
\vartheta^1\vartheta^2\vartheta^2+\vartheta^2\vartheta^2\vartheta^1 &=& \vartheta^3\vartheta^1\vartheta^3,
\label{form:additional reltions 1}\\
\vartheta^2\vartheta^2\vartheta^1+\vartheta^2\vartheta^1\vartheta^2 &=& \vartheta^1\vartheta^3\vartheta^3,
\label{form:additional reltions 2}\\
\vartheta^2\vartheta^1\vartheta^2+\vartheta^1\vartheta^2\vartheta^2 &=& \vartheta^3\vartheta^3\vartheta^1.
\label{form:additional reltions 3}
\end{eqnarray}
For instance, making use of second relation (\ref{form:cyclic relations 2}), we can express
$$
\vartheta^1\vartheta^2\vartheta^2+\vartheta^2\vartheta^2\vartheta^1=-\vartheta^2\vartheta^1\vartheta^2,
$$
and, substituting this into (\ref{form:additional reltions 1}), we obtain
\be
\vartheta^2\vartheta^1\vartheta^2+\vartheta^3\vartheta^1\vartheta^3=0.
\label{form:extra relation}
\ee
Applying the same method to (\ref{form:additional reltions 2}),(\ref{form:additional reltions 3}), we get
\begin{eqnarray}
\vartheta^1\,\big((\vartheta^2)^2+(\vartheta^3)^2\big) &=& 0,\label{form:defining relations 1}\\
\big((\vartheta^2)^2+(\vartheta^3)^2\big)\,\vartheta^1 &=& 0.\label{form:defining relations 2}
\end{eqnarray}
It is easy to see that the relation (\ref{form:extra relation}) can be deduced from the relations (\ref{form:defining relations 1}),(\ref{form:defining relations 2}). Indeed, adding up relations (\ref{form:defining relations 1}),(\ref{form:defining relations 2}) and applying second relation in (\ref{form:cyclic relations 2}), we get (\ref{form:extra relation}). Similar calculations in the case of relations {\bf III}, {\bf V} give four more independent relations
\begin{eqnarray}
&& \vartheta^2\,\big((\vartheta^3)^2+(\vartheta^1)^2\big)=0,\;\;
    \big((\vartheta^3)^2+(\vartheta^1)^2\big)\,\vartheta^2=0,\\
&& \vartheta^3\,\big((\vartheta^1)^2+(\vartheta^2)^2\big)=0,\;\;
    \big((\vartheta^1)^2+(\vartheta^2)^2\big)\,\vartheta^3=0.
\end{eqnarray}
It is worth to note that relations {\bf II},{\bf IV}, {\bf VI} do not give new relations.

For the convenience of presenting the obtained algebraic relations, we arrange them in a following table
\begin{eqnarray}
&& \vartheta^1\,\big((\vartheta^2)^2+(\vartheta^3)^2\big)=0,\;\;\;
        \big((\vartheta^2)^2+(\vartheta^3)^2\big)\,\vartheta^1=0,\\
&&  \vartheta^2\,\big((\vartheta^3)^2+(\vartheta^1)^2\big)=0,\;\;\;
      \big((\vartheta^3)^2+(\vartheta^1)^2\big)\,\vartheta^2=0,\\
&&  \vartheta^3\,\big((\vartheta^1)^2+(\vartheta^2)^2\big)=0,\;\;\;
       \big((\vartheta^1)^2+(\vartheta^2)^2\big)\,\vartheta^3=0.
\end{eqnarray}
\begin{theorem}
An element $t(\vartheta)=t_{ijk}\vartheta^i\vartheta^j\vartheta^k$, where $t_{ijk}$ are components of a tensor of rank 3 and $\vartheta^1,\vartheta^2,\vartheta^3$ are generators of tensor algebra $T_\vartheta(E)$, is a solution of the equation $H^2\,t(\vartheta)=6\,t(\vartheta)$ if and only if the generators $\vartheta^1,\vartheta^2,\vartheta^3$ are subjected to the following ternary relations
\begin{eqnarray}
&&\{\vartheta^1,\vartheta^2,\vartheta^3\}=0,\;\;\;\{\vartheta^3,\vartheta^2,\vartheta^1\}=0,\label{form:cyclic relations 11}\\
&&\{\vartheta^1,\vartheta^1,\vartheta^2\}=0,\;\;\;\{\vartheta^2,\vartheta^2,\vartheta^1\}=0,\label{form:cyclic relations 22}\\
&&\{\vartheta^1,\vartheta^1,\vartheta^3\}=0,\;\;\;\{\vartheta^3,\vartheta^3,\vartheta^1\}=0,\label{form:cyclic relations 33}\\
&&\{\vartheta^2,\vartheta^2,\vartheta^3\}=0,\;\;\;\{\vartheta^3,\vartheta^3,\vartheta^2\}=0,\label{form:cyclic relations 44}\\
&&\;(\vartheta^1)^3=0,\;\;\;(\vartheta^2)^3=0,\;\;\;(\vartheta^3)^3=0,\label{form:cyclic relations 55}\\
&& \vartheta^1\,\big((\vartheta^2)^2+(\vartheta^3)^2\big)=0,\;\;\;
        \big((\vartheta^2)^2+(\vartheta^3)^2\big)\,\vartheta^1=0,\label{form:cyclic relations 66}\\
&&  \vartheta^2\,\big((\vartheta^3)^2+(\vartheta^1)^2\big)=0,\;\;\;
      \big((\vartheta^3)^2+(\vartheta^1)^2\big)\,\vartheta^2=0,\label{form:cyclic relations 77}\\
&&  \vartheta^3\,\big((\vartheta^1)^2+(\vartheta^2)^2\big)=0,\;\;\;
       \big((\vartheta^1)^2+(\vartheta^2)^2\big)\,\vartheta^3=0.\label{form:cyclic relations 88}
\end{eqnarray}
The ternary relations (\ref{form:cyclic relations 11}) - (\ref{form:cyclic relations 88}) are invariant under a representation of the rotations group in the vector space of tensors of rank 3. The subspace of vector space of tensors of rank 3, determined by the ternary relations (\ref{form:cyclic relations 11}) - (\ref{form:cyclic relations 88}), is 10-dimensional and a tensor representation of weight 2 of the rotations group in this 10-dimensional space can be split into two irreducible representations.
\end{theorem}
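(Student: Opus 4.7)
The plan is to organize the expansion of $H^2\,t(\vartheta)$ carried out immediately above the statement into a clean iff, and then address invariance, dimension, and the splitting separately.

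For the equivalence, I would start from the identity $H^2\,t(\vartheta) - 6\,t(\vartheta) = \mathfrak{S}_1 + \mathfrak{S}_2 + \mathfrak{S}_3$, which the running calculation establishes once the $\mathfrak{R}'_i$ contributions in (\ref{form:square of dR_F}) collapse to $6\,t(\vartheta)$. Viewing the 27 components $t_{ijk}$ as independent coefficients, the equation $H^2 t(\vartheta) = 6\,t(\vartheta)$ holds in $T^3_\vartheta(E)$ exactly when the ternary expression multiplying each $t_{ijk}$ vanishes, which is precisely the system of 27 equations read off from $\mathfrak{S}_1$, $\mathfrak{S}_2$, $\mathfrak{S}_3$. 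The theorem therefore reduces to showing that this system has the same solution set in $T^3_\vartheta(E)$ as the compressed list (\ref{form:cyclic relations 11})--(\ref{form:cyclic relations 88}).

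For that equivalence, the forward reduction is already laid out in the exposition: summing the equations of blocks I and II and combining with (\ref{form:third group relations 1}) yields $(\vartheta^1)^3=0$ and its cyclic analogues (\ref{form:cyclic relations 55}); summing block I alone and substituting the cubes produces the cyclic-sum identities (\ref{form:cyclic relations 11})--(\ref{form:cyclic relations 44}); and the manipulation leading to (\ref{form:defining relations 1})--(\ref{form:defining relations 2}) produces the binary-square relations (\ref{form:cyclic relations 66})--(\ref{form:cyclic relations 88}). For the converse direction I would recover each of the 27 original equations from the compressed list by unrolling these steps; for instance, the three identities in block I are reassembled from (\ref{form:cyclic relations 22}), (\ref{form:cyclic relations 33}), (\ref{form:cyclic relations 55}) and (\ref{form:cyclic relations 66}), and the analogous blocks II--VI follow by applying the $S_3$-symmetry on the index labels. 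This combinatorial bookkeeping is the main obstacle, since one must verify that no spurious dependencies appear when re-expanding cyclic sums.

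Invariance of the defining relations under $R_g$, $g \in \mbox{SO}(3)$, is then immediate. The operator $H^2 = -\sum_i (dR_{A_i})^2$ is the image under $dR$ of the quadratic Casimir of $\mbox{so}(3)$, so $H^2$ commutes with every $R_g$; hence the $6$-eigenspace $\ker(H^2 - 6\,\mathrm{Id})$ is $R_g$-stable, and by the iff already established this is exactly the subspace defined by (\ref{form:cyclic relations 11})--(\ref{form:cyclic relations 88}). Finally, the first part of the theorem identifies this solution subspace with the tensor subspace $\mathfrak{T}^3(E)$ cut out by the traceless and cyclic-sum conditions (\ref{form:subspace of traceless tensors}); by the Gelfand result quoted in the excerpt, $\mathfrak{T}^3(E)$ has dimension $10$ and carries the weight-$2$ representation with multiplicity two, the explicit decomposition into two $5$-dimensional irreducibles being provided by the $\mathbb{Z}_3$-cyclic conditions (\ref{intr:cyclic with q}) and (\ref{intr:cyclic with q^2}).
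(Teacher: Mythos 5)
Your proposal is correct in substance and its treatment of the ``if and only if'' part coincides with the paper's: both reduce the eigenvalue equation to the vanishing of the $27$ coefficient expressions $\mathfrak S_1,\mathfrak S_2,\mathfrak S_3$ and then compress these to the list (\ref{form:cyclic relations 11})--(\ref{form:cyclic relations 88}); like the paper, you leave the converse compression step (that the $17$ relations regenerate all $27$) at the level of a sketch, so this is not a gap relative to the paper's own standard of rigour. Where you genuinely diverge is in the invariance and dimension arguments. For invariance the paper computes directly: it shows $\{\tilde\vartheta^i,\tilde\vartheta^j,\tilde\vartheta^k\}=g^i_mg^j_ng^k_p\{\vartheta^m,\vartheta^n,\vartheta^p\}$ for (\ref{form:cyclic relations 11})--(\ref{form:cyclic relations 55}), and for (\ref{form:cyclic relations 66})--(\ref{form:cyclic relations 88}) it first establishes the invariance of $\delta_{ij}\vartheta^i\vartheta^j$ and expands $\tilde\vartheta^1\big((\tilde\vartheta^2)^2+(\tilde\vartheta^3)^2\big)$ term by term. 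You instead invoke the fact that $H^2$ is the image of the quadratic Casimir and therefore commutes with every $R_g$, so that $\ker(H^2-6\,\mathrm{Id})$ is $R_g$-stable. This is cleaner and automatic, but note that the relations themselves span the \emph{image} of $H^2-6\,\mathrm{Id}$ inside the $27$-dimensional span of triple products, not its kernel; to conclude that the span of the relations is invariant you should either apply $R_g$ directly to $(H^2-6\,\mathrm{Id})(\vartheta^i\vartheta^j\vartheta^k)$ and use the commutation, or observe that $H^2$ is self-adjoint for the Hermitian product of Section 2, so the image is the orthocomplement of the kernel and inherits invariance from it under the unitary $R_g$. The paper's computational route has the side benefit of isolating the invariant element $\delta_{ij}\vartheta^i\vartheta^j$, which motivates the binary relation introduced immediately afterwards. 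For the dimension, the paper counts $27-17=10$ (which tacitly requires the linear independence of the $17$ relations, asserted but not verified) and exhibits an explicit monomial basis later used to build $\eta^1,\dots,\eta^5$; your appeal to the Gelfand--Minlos--Shapiro description of the weight-$2$ space sidesteps the independence check but does not produce that basis. Both routes are legitimate; yours is more conceptual, the paper's more constructive.
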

\begin{proof}
If we pass from one Cartesian coordinate system in Euclidean space $E$ to another by means of $g=(g^i_j)\in\mbox{SO}(3)$ then generators of tensor algebra undergo the transformation $\tilde\vartheta^i=g^i_j\vartheta^j$. Hence, in new coordinate system we have
$$
\{\tilde\vartheta^i,\tilde\vartheta^j,\tilde\vartheta^k\}=g^i_mg^j_ng^k_p\,\{\vartheta^m,\vartheta^n,\vartheta^p\}=0,
$$
which means that the relations (\ref{form:cyclic relations 11}) - (\ref{form:cyclic relations 55}) are invariant. In order to show that the relations (\ref{form:cyclic relations 66}) - (\ref{form:cyclic relations 88}) are also invariant we prove this only for the first relation in (\ref{form:cyclic relations 66}), an invariance of other relations of this type is proved analogously. First of all, we observe that the expression $(\vartheta^1)^2+\vartheta^2)^2+(\vartheta^3)^2$ is invariant under transformation $\vartheta^i\to g^i_j\vartheta^j$. Indeed we have
$$
\sum_{i=1}^3\,(\tilde\vartheta^i)^2=\delta_{ij}\tilde\vartheta^i\tilde\vartheta^j=
\delta_{ij}g^i_mg^j_n\,\vartheta^m\vartheta^n=\delta_{mn}\vartheta^m\vartheta^n=\sum_{i=1}^3(\vartheta^i)^2.
$$
Then
\begin{eqnarray}
\tilde\vartheta^1\;\big((\tilde\vartheta^2)^2+(\tilde\vartheta^3)^2\big) \!\!\!\!\!&=&\!\!\!\!
       \tilde\vartheta^1\;\big((\tilde\vartheta^1)^2+(\tilde\vartheta^2)^2+(\tilde\vartheta^3)^2\big)\nonumber\\
&=&\!\!\!\!(g^1_1\vartheta^1+g^1_2\vartheta^2+g^1_3\vartheta^3)\big((\vartheta^1)^2+(\vartheta^2)^2+(\vartheta^3)^2\big)\nonumber\\
&=&\!\!\!\!g^1_1\vartheta^1\big((\vartheta^2)^2+(\vartheta^3)^2\big)+
         g^1_2\vartheta^2\big((\vartheta^3)^2+(\vartheta^1)^2\big)\nonumber\\
&&\qquad\qquad\qquad\qquad\qquad +g^1_3\vartheta^3\big((\vartheta^1)^2+(\vartheta^2)^2\big)=0.\nonumber
\end{eqnarray}
To find the dimension of the subspace, determined by the relations (\ref{form:cyclic relations 11}) - (\ref{form:cyclic relations 88}), we subtract the number of relations in (\ref{form:cyclic relations 11}) - (\ref{form:cyclic relations 88}) from the number of all triple products of generators, i.e. 27-17=10. For instance the following triple products can be taken as a basis for this 10-dimensional subspace:
\begin{eqnarray}
&&\vartheta^1\vartheta^2\vartheta^3,\;\;\;\vartheta^2\vartheta^3\vartheta^1,\\
&&\vartheta^3\vartheta^2\vartheta^1,\;\;\;\vartheta^2\vartheta^1\vartheta^3,\\
&&\vartheta^1\vartheta^2\vartheta^2,\;\;\;\vartheta^2\vartheta^2\vartheta^1,\\
&&\vartheta^2\vartheta^3\vartheta^3,\;\;\;\vartheta^3\vartheta^3\vartheta^2,\\
&&\vartheta^3\vartheta^1\vartheta^1,\;\;\;\vartheta^1\vartheta^1\vartheta^3.
\end{eqnarray}
The statement that in this 10-dimensional subspace there is a double irreducible representation of the rotations group follows from the theory of representations of the rotations group \cite{Gelfand:2018}.
\end{proof}
One may notice that the relations (\ref{form:cyclic relations 11}) - (\ref{form:cyclic relations 88}) can naturally be divided into two parts. In the first part (\ref{form:cyclic relations 11}) - (\ref{form:cyclic relations 55}), relations can not be obtained from quadratic (or binary relations) for generators. However, the relations in the second part (\ref{form:cyclic relations 66}) - (\ref{form:cyclic relations 88}) can be derived from binary relations, moreover, one binary relation is enough for this. In the proof of the theorem we used the quadratic expression
$$
(\vartheta^1)^2+(\vartheta^2)^2+(\vartheta^3)^2=\delta_{ij}\vartheta^i\vartheta^j,
$$
and showed that it is invariant with respect to rotations of the space $E$. It is easy to see that all six relations
(\ref{form:cyclic relations 66}) - (\ref{form:cyclic relations 88}) follow from the quadratic relation
\be
(\vartheta^1)^2+(\vartheta^2)^2+(\vartheta^3)^2=0,
\ee
if we successively multiply it one time from the left and second time from the right by generators $\vartheta^1,\vartheta^2,\vartheta^3$ and use $(\vartheta^1)^3=(\vartheta^2)^3=(\vartheta^3)^3=0$. Obviously, the structure of the 10-dimensional space defined by the relations (\ref{form:cyclic relations 11}) - (\ref{form:cyclic relations 88}) does not change, that is, it will be the space of solutions of equation (\ref{form:particular case of basic equation l=2}). However, the structure of the subspace spanned by the binary products of the generators $\vartheta^i\vartheta^j$ will change, its dimension will decrease by one.

Now we can give the following definition:
\begin{defn}
\emph{An algebra associated with a representations of weight 2 of the rotations group} in the space of rank 3 tensors of 3-dimensional Euclidean space $E$ is a unital associative algebra over $\mathbb C$ generated by $\vartheta^1,\vartheta^2,\vartheta^3$, which satisfy the following conditions:
\begin{enumerate}
\item
rotation $e_i\to g^j_i\,e_j$ of the space $E$ implies the transformation of generators according to the formula $\vartheta^i\to \tilde g^i_j\,\vartheta^j$, where $\tilde g$ is inverse matrix for $g$,
\item
$\delta_{ij}\vartheta^i\vartheta^j=0$ (quadratic relations),
\item
$\vartheta^i\vartheta^j\vartheta^k+\vartheta^j\vartheta^k\vartheta^i+\vartheta^k\vartheta^i\vartheta^j=0$ (cubic relations),
\end{enumerate}
where $i,j,k$ is any triple of integers $1,2,3$.
This algebra will be denoted by $\frak T_\vartheta(E)$ and its 10-dimensional subspace spanned by triple products of generators, which obey the cubic relations (\ref{form:cyclic relations 11}) - (\ref{form:cyclic relations 88}), will be denoted by $\frak T_\vartheta^3(E)$.
\end{defn}
%%%%%%%%%%%%%%%%%%%%%%%%%
%%%%%%%%%%%%%%%%%%%%%%%%%
\section{Algebras associated with irreducible tensor representations of weight 2 of $\mbox{SO}(3)$}
We have the algebra $\frak T_\vartheta(E)$ associated with tensor representations of the rotations group of weight 2 and in the 10-dimensional subspace $\frak T^3_\vartheta(E)$ spanned by triple products of generators we have a double irreducible representation of $\mbox{SO}(3)$. In \cite{Gelfand:2018} it is explained that any splitting of 10-dimensional space $\frak T^3(E)$ into direct sum of subspaces, which are invariant under representation of $\mbox{SO}(3)$, gives two irreducible tensor representations of $\mbox{SO}(3)$. There are few ways to split invariantly $\frak T^3(E)$ into subspaces and one of them is to use the subspaces of eigenvectors of linear operators induced by substitutions.

In general this method can be described as follows. Assume we consider a vector space of rank $r$ tensors $T^r(E)$ and let $\sigma$ be a substitution of integers $1,2,\ldots,r$. Then it induces the linear operator $L_\sigma$ in a vector space $T^r(E)$ defined by
\be
L_\sigma(t_{i_1i_2\ldots i_r})=t_{i_{\sigma(1)}i_{\sigma(2)}\ldots i_{\sigma(r)}}.
\ee
For any substitution there exists an integer $n$ such that $\sigma^n=e$, where $e$ is the identical substitution. Thus, $L_{\sigma}^n=I$, where $I$ is identical linear transformation in $T^r(E)$. From this it follows that the eigenvalues of $L_{\sigma}$ are $n$th roots of unity. Let $q$ be a primitive $n$th root of unity, then $1,q,q^2,\ldots,q^{n-1}$ are all $n$th roots of unity. For any $n$th root of unity $q^k$ the equation
\be
L_\sigma(t_{i_1i_2\ldots i_r})=q^k\;t_{i_1i_2\ldots i_r},
\label{form:equation for eigenvectors q_k}
\ee
determines the subspace of eigenvectors of $L_\sigma$ and the direct sum of all these subspace gives a whole vector space $T^r(E)$. Evidently this splitting into direct sum of subspaces is invariant under tensor representation of the rotations group.

We can extend this method to a subspace $T^r_\vartheta(E)$ of tensor algebra $T_\vartheta(E)$ as follows. Multiplying both sides of the equation (\ref{form:equation for eigenvectors q_k}) by $\vartheta^{i_1}\,\vartheta^{i_2}\ldots \vartheta^{i_r}$ and summing up with respect to all indices, we get
$$
L_\sigma(t_{i_1i_2\ldots i_r})\vartheta^{i_1}\vartheta^{i_2}\ldots\vartheta^{i_r}=q^k\;t_{i_1i_2\ldots i_r}
                                                                      \vartheta^{i_1}\vartheta^{i_2}\ldots\vartheta^{i_r},
$$
or
\be
t_{i_{\sigma(1)}i_{\sigma(2)}\ldots i_{\sigma(r)}}\vartheta^{i_1}\vartheta^{i_2}\ldots\vartheta^{i_r}=q^k\;
                 \;t_{i_1i_2\ldots i_r}\vartheta^{i_1}\vartheta^{i_2}\ldots\vartheta^{i_r}.
\label{form:rearranged equation}
\ee
But the terms at the left hand side of (\ref{form:rearranged equation}) can be rearranged as follows
$$
t_{i_{\sigma(1)}i_{\sigma(2)}\ldots i_{\sigma(r)}}\vartheta^{i_1}\vartheta^{i_2}\ldots\vartheta^{i_r}=
   t_{i_1i_2\ldots i_r}\vartheta^{i_{\varsigma(1)}}\vartheta^{i_{\varsigma(2)}}\ldots \vartheta^{i_{\varsigma(r)}},
$$
where $\varsigma$ is the inverse substitution for $\sigma$. Hence equation (\ref{form:rearranged equation}) can be written in the form
\be
t_{i_1i_2\ldots i_r}\;\vartheta^{i_{\varsigma(1)}}\vartheta^{i_{\varsigma(2)}}\ldots \vartheta^{i_{\varsigma(r)}}=
    t_{i_1i_2\ldots i_r}\;q^k\,\vartheta^{i_1}\vartheta^{i_2}\ldots\vartheta^{i_r}.
\ee
Assuming the components of a tensor $t$ to be independent, we can determine the subspace of a vector space $T^r_\vartheta(E)$ by imposing additional conditions on the $r$th order products of generators
\be
\vartheta^{i_{\varsigma(1)}}\vartheta^{i_{\varsigma(2)}}\ldots \vartheta^{i_{\varsigma(r)}}=q^k\;
                    \vartheta^{i_1}\vartheta^{i_2}\ldots\vartheta^{i_r},
\ee
or
\be
\vartheta^{i_1}\vartheta^{i_2}\ldots\vartheta^{i_r}=q^{-k}
                \vartheta^{i_{\varsigma(1)}}\vartheta^{i_{\varsigma(2)}}\ldots \vartheta^{i_{\varsigma(r)}},
\label{form:algebraic relations with q^k}
\ee
and this method is equivalent to the method based on equation (\ref{form:equation for eigenvectors q_k}). It should be noted an important thing here is that the algebraic relations (\ref{form:algebraic relations with q^k})  are invariant with respect to a representation of the rotations group.

Now we apply this method to the 10-dimensional vector space $\frak T_\vartheta^3(E)$ of the algebra $\frak T_\vartheta(E)$. Let $\sigma$ be the substitution of integers 1,2,3 defined by
$$
\sigma=\left(
    \begin{array}{ccc}
      1 & 2 & 3 \\
      2 & 3 & 1 \\
    \end{array}
  \right).
$$
Obviously $\sigma^3=e$. Let $q=\exp{(2\pi\,i/3)}$ be the primitive cubic root of unity and $\bar q=q^2$ its conjugate, then $1+q+\bar q=0$. Previous considerations lead to introduction of the following algebraic structure:
\begin{defn}
\emph{A $q$-algebra \big($\bar q$-algebra\big) associated with an irreducible tensor representation of the rotations group of weight 2} in 3-dimensional space $E$ is a unital associative algebra over $\mathbb C$ generated by $\theta^1,\theta^2,\theta^3\;\;$ \big($\bar\theta^1,\bar\theta^2,\bar\theta^3$\big), which satisfy the following conditions:
\begin{enumerate}
\item rotation $e_i\to g^j_i\,e_j$ of Euclidean space $E$ entails the transformation \newline
$\theta^i\to (g^{-1})^i_j\,\theta^j,\qquad\qquad\qquad\;\big(\bar\theta^i\to (g^{-1})^i_j\,\bar\theta^j\big),$
\item
$\delta_{ij}\theta^i\theta^j=0,\;\;\;\;\;\;\qquad\qquad\qquad \big(\delta_{ij}\bar\theta^i\bar\theta^j=0\big)$,
\item
$\theta^{i_1}\theta^{i_2}\theta^{i_3}
              =q\;\theta^{i_{\sigma(1)}}\vartheta^{i_{\sigma(2)}}\theta^{i_{\sigma(3)}},\;
              \big(\bar\theta^{i_1}\bar\theta^{i_2}\bar\theta^{i_3}
                   =\bar q\;\bar\theta^{i_{\sigma(1)}}\bar\theta^{i_{\sigma(2)}}\bar\theta^{i_{\sigma(3)}}\big)$,
\end{enumerate}
where $i_1,i_2,i_3$ is any triple of integers $1,2,3$.
The $q$-algebra $\big(\bar q\mbox{-algebra}\big)$ will be denoted by $\frak T_{\theta}(E)$
$\big(\frak T_{\bar\theta}(E)\big)$ and its 5-dimensional subspace spanned by triple products of generators will be denoted by
$\frak T_{\theta}^3(E)$ $\big(\frak T_{\bar\theta}^3(E)\big)$.
\end{defn}
We describe the structure of a $q$-algebra $\frak T_\theta(E)$, because $\bar q$-algebra $\frak T_{\bar\theta}(E)$ has a similar structure.
As previously noted, unital associative algebra generated by $\theta^1,\theta^2,\theta^3$, which obey only cubic relations
\begin{equation}
\theta^{i_1}\theta^{i_2}\theta^{i_3}
              =q\;\theta^{i_{\sigma(1)}}\vartheta^{i_{\sigma(2)}}\theta^{i_{\sigma(3)}},
\label{form:cubic relations with q}
\end{equation}
(and there are no quadratic relations!) was studied in a series of papers and its structure is known. In particular, it can be shown that this algebra does not contain monomials whose degree is greater than or equal to four. Indeed, sequentially applying the associativity and cubic relations (\ref{form:cubic relations with q}) to the triples of adjacent generators in the product of four generators, after a series of transformations, we return to the original product, but with a coefficient different from unity, which implies that any such product must be equal to zero. Indeed, we have
\begin{eqnarray}
&&(\theta^i\theta^j\theta^k)\theta^l=q\;(\theta^j\theta^k\theta^i)\theta^l=q\;\theta^j(\theta^k\theta^i\theta^l)=
     q^2\;\theta^j(\theta^i\theta^l\theta^k)
 =q^2\;(\theta^j\theta^i\theta^l)\theta^k\nonumber\\
   &&\qquad\quad  =q^3\;(\theta^i\theta^l\theta^j)\theta^k=q^3\;\theta^i(\theta^l\theta^j\theta^k)=
           q^4\;\theta^i(\theta^j\theta^k\theta^l)=q\;(\theta^i\theta^j\theta^k)\theta^l.\nonumber
\end{eqnarray}
Obviously, this result also holds in the case of $q$-algebra $\frak T_\theta(E)$, since the cubic relations (\ref{form:cubic relations with q}) consist a part of the system of defining relations of $q$-algebra $\frak T_\theta(E)$. Thus, $q$-algebra $\frak T_\theta(E)$ is a direct sum of the following vector spaces: the one-dimensional space generated by the identity element of the algebra, the 8-dimensional space generated by pairwise products $\theta^i\theta^j$ (there will be 9 such products, but we have the quadratic relation
$$
\delta_{ij}\theta^i\theta^j=0,
$$
which reduces the dimension to 8), and the 5-dimensional space generated by triple products $\theta^i\theta^j\theta^k$. We are particularly interested in the latter, since it is a representation space for irreducible representation of the rotations group $\mbox{SO}(3)$ of weight 2. The following interesting analogy should be noted. In the Grassmann algebra of three-dimensional space, the highest degree of a monomial equals three, that is, all monomials of a higher degree vanish. On the other hand, the space of monomials of the third degree is a one-dimensional representation space for an irreducible representation of the group of rotations of weight zero in the spaces of skew-symmetric tensors of the third rank.

First of all, we choose a basis consisting of independent monomials for this 5-dimensional vector space. The next monomials are independent
\begin{equation}
\theta^1\theta^2\theta^2,\;\theta^2\theta^3\theta^3,\;\theta^3\theta^1\theta^1,\;
\theta^1\theta^2\theta^3,\;\theta^3\theta^2\theta^1.
\label{form:independent monomials }
\end{equation}
Now in this space it is necessary to define a Hermitian scalar product in such a way that it corresponds to a unitary representation of the rotations group, that is, for each skew-symmetric matrix $F\in\mbox{so}(3)$ the operator $dR_F$ generated by it would be traceless skew-Hermitian. We do this using the orthonormal basis $\eta^A$, where superscript $A$ runs from 1 to 5, which is defined as follows
\begin{eqnarray}
\eta^1 &=& \sqrt{2}\;\theta^1\theta^2\theta^2,\nonumber\\
\eta^2 &=& \sqrt{2}\;\theta^2\theta^3\theta^3,\nonumber\\
\eta^3 &=& \sqrt{2}\;\theta^3\theta^1\theta^1,\nonumber\\
\eta^4 &=&            \theta^1\theta^2\theta^3,\nonumber\\
\eta^5 &=&             \theta^3\theta^2\theta^1.\nonumber
\end{eqnarray}
Hence we define the Hermitian scalar product in the 5-dimensional space spanned by triples products of $q$-algebra $\frak T_\theta(E)$ by means of
$$
<\eta^A,\eta^B>=\delta^{AB}.
$$
Applying (\ref{form:formula for dR_F}) to basic monomials $\eta^A$, for any $F\in\mbox{so}(3)$ we find
\begin{eqnarray}
dR_F(\eta^1) &=& -F^1_2\,\eta^2-F^1_3\,\eta^3+\sqrt{2}\,F^2_3\,\eta^4+q\sqrt{2}\,\,F^2_3\,\eta^5,\nonumber\\
dR_F(\eta^2) &=& -F^2_1\,\eta^1-F^2_3\,\eta^3-\sqrt{2}\,F^1_3\,\eta^4-q^2\sqrt{2}\,F^1_3\,\eta^5,\nonumber\\
dR_F(\eta^3) &=& -F^3_1\,\eta^2-F^3_2\,\eta^2+q\sqrt{2}\,F^1_2\,\eta^4+\sqrt{2}\,F^1_2\,\eta^5,\nonumber\\
dR_F(\eta^4) &=& -\sqrt{2}\,F^2_3\,\eta^1-q\sqrt{2}F^3_1\,\eta^2-q^2\sqrt{2}\,F^1_2\,\eta^3,\nonumber\\
dR_F(\eta^5) &=& -q^2\sqrt{2}\,F^2_3\,\eta^1-q\sqrt{2}F^3_1\,\eta^2-\sqrt{2}\,F^1_2\,\eta^3,\nonumber
\end{eqnarray}
which gives the matrix of an operator $dR_F$
\begin{equation}
\left(
                \begin{array}{cccccc}
                  0 & F^1_2 & F^1_3 &|& -\sqrt{2}\,F^2_3 & -\bar q\sqrt{2}\,F^2_3 \\
                  F^2_1 & 0 & F^2_3 &|& -q\sqrt{2}\,F^3_1 & -q\sqrt{2}\,F^3_1 \\
                  F^3_1 & F^3_2 & 0 &|& -\bar q\sqrt{2}\,F^1_2 & -\sqrt{2}\,F^1_2 \\
                  --&--&--&--&--&--\\
                  \sqrt{2}\,F^2_3 & \bar q\sqrt{2}\,F^3_1 & q\sqrt{2}\,F^1_2 &|& 0 & 0 \\
                  q\sqrt{2}\,F^2_3 & \bar q\sqrt{2}\,F^3_1 & \sqrt{2}\,F^1_2 &|& 0 & 0 \\
                \end{array}
              \right)
\label{form:matrix dR_F}
\end{equation}
For any $F\in\mbox{so}(3)$ this matrix is traceless and skew-Hermitian, i.e.
$$
\mbox{Tr}(dR_F)=0,\;\;dR_F+(dR_F)^\dag=0,
$$
where $\dag$ stands for complex conjugation and transposition.
Therefore matrix (\ref{form:matrix dR_F}) gives an explicit formula for the homomorphism of two Lie algebras $dR:\mbox{so}(3)\to\mbox{su}(5)$.

It is interesting to note that the submatrix, which is located in the upper left corner of matrix (\ref{form:matrix dR_F}), is an infinitesimal rotation $F$ in three-dimensional space. Thus, in the 5-dimensional space of irreducible representation of $\mbox{SO}(3)$ spanned by independent triple monomials, it is natural to single out a three-dimensional subspace, whose basis vectors are monomials
\begin{equation}
\eta^1 = \sqrt{2}\;\theta^1\theta^2\theta^2,\;\;
\eta^2 = \sqrt{2}\;\theta^2\theta^3\theta^3,\;\;
\eta^3 = \sqrt{2}\;\theta^3\theta^1\theta^1.
\label{form:three first independent monomials}
\end{equation}
An orthogonal complement to this subspace is the two-dimensional subspace spanned by the monomials
\begin{equation}
\eta^4 = \theta^1\theta^2\theta^3,\;\;\eta^5 = \theta^3\theta^2\theta^1.
\label{form:two last independent monomials}
\end{equation}
Obviously, if we consider a vector of the three-dimensional space (\ref{form:three first independent monomials}), apply the matrix of a linear operator $dR_F$ to it, and then project the resulting five-dimensional vector onto the same three-dimensional subspace by orthogonal projection, we obtain the infinitesimal rotation of an initial vector.
%%%%%%%%%%%%%%%%%%%%%%%%%%%%%%%%%
%%%%%%%%%%%%%%%%%%%%%%%%%%%%%%%%%
\section{Discussion}
Grassmann algebra or exterior algebra of three-dimensional space can be constructed as follows. In three-dimensional space, we choose a basis and this choice determines the basis of dual space. As generators of an algebra we take the covectors of the dual basis and define their exterior multiplication (wedge product), which is associative and antisymmetric. The space spanned by triple products of covectors of dual basis is one-dimensional and it is the space of an irreducible representation of the group of rotations $\mbox{SO}(3)$ of weight zero in the space of skew-symmetric rank 3 tensors.

But the rotations group has, besides the weight 0 representation, representations of the weight 1,2 and 3 in the space of rank 3 tensors. In the present paper, we show that representations of weight 2 can be used to construct an analog of Grassmann algebra. We construct two algebras, these algebras are called $q$-algebra and $\bar q$-algebra, which can be considered as analogues of a Grassmann algebra. In these algebras, as well as in Grassmann algebra, the highest order of a monomial that can be formed from generators is 3, that is, monomials of higher orders vanish. The space spanned by independent third-order monomials is five-dimensional. This five-dimensional space is the space of an irreducible representation of the rotations group in a space of rank 3 tensors. At an infinitesimal level, we obtain a representation of the Lie algebra of the rotations group $\mbox{so}(3)\to\mbox{so}(5)$ and it can be shown that at the group level we have $\mbox{SO}(3)\to\mbox{SO}(5)$. Since group $\mbox{SO}(5)$  is the gauge group of Georgi-Glashow model, we think that a $q$-algebra, as well as $\bar q$-algebra, can have applications in this theory. The reason for this is the explicit form of the matrix $dR_F$ obtained in the present paper
\begin{equation}
\left(
                \begin{array}{cccccc}
                  0 & F^1_2 & -F^3_1 &|& -\sqrt{2}\,F^2_3 & -\bar q\sqrt{2}\,F^2_3 \\
                  -F^1_2 & 0 & F^2_3 &|& -q\sqrt{2}\,F^3_1 & -q\sqrt{2}\,F^3_1 \\
                  F^3_1 & -F^2_3 & 0 &|& -\bar q\sqrt{2}\,F^1_2 & -\sqrt{2}\,F^1_2 \\
                  --&--&--&--&--&--\\
                  \sqrt{2}\,F^2_3 & \bar q\sqrt{2}\,F^3_1 & q\sqrt{2}\,F^1_2 &|& 0 & 0 \\
                  q\sqrt{2}\,F^2_3 & \bar q\sqrt{2}\,F^3_1 & \sqrt{2}\,F^1_2 &|& 0 & 0 \\
                \end{array}
              \right).
\label{form:discussion_matrix dR_F}
\end{equation}
The standard model quarks and leptons fit nicely into representations of $\mbox{SO}(5)$. Each left-handed fermions generation of standard model can be grouped into two representations $\bf\bar 5$ and $\bf 10$ of $\mbox{SO}(5)$ in the following way
$$
\bf\bar 5=\left(
          \begin{array}{c}
            d^c_1 \\
            d^c_2 \\
            d^c_3 \\
            ---\\
            e^{-} \\
            -\nu_e \\
          \end{array}
        \right),
{\bf 10}=\left(
  \begin{array}{cccccc}
    0 & u^c_3 & -u^c_2 &|& -u^1 & -d^1 \\
    -u^c_3 & 0 & u^c_1 &|& -u^2 & -d^2 \\
    u^c_2 & -u^c_1 & 0 &|& -u^3 & -d^3 \\
    --&--&--&--&--&--\\
    u^1 & u^2 & u^3 &|& 0 & -e^c \\
    d^1 & d^2 & d^3 &|& e^c & 0 \\
  \end{array}
\right)
$$
A comparison of two matrices, that is, matrix $dR_F$ with matrix 10, shows that a $q$-algebra can be applied to the description of quarks, because in the lower right corner of matrix $dR_F$ there is a completely zero block, while in matrix 10 this block is responsible for the electron.

We would also like to note possible mathematical applications of the constructed $q$-algebra. As is known, Grassmann algebra serves as the basis for constructing the algebra of differential forms. This is equivalent to passing from tensors to tensor fields. The differential form cannot contain the square of any differential, because the generators of Grassmann algebras anticommute. In the calculus of differential forms, this leads to the fact that the exterior differential $d$ gives only first-order derivatives of tensors fields, second-order derivatives disappear due to the property $d^2=0$. The monomials of $q$-algebra $\frak T_\theta(E)$ can contain squares of generators and this indicates that it will be possible to construct by means of this algebra an analog of differential forms in which the differential (analog of exterior differential) will give second-order derivatives and this calculus could be used to study operators that include second-order derivatives, for example, the Laplace operator.
%%%%%%%%%%%%%%%%%%%%%%%%%%%%%%%%
%%%%%%%%%%%%%%%%%%%%%%%%%%%%%%%%
%%%%%%%%%%%%%%%%%%%%%%%%%%%%%%%%
\providecommand{\href}[2]{#2}

\address{
Institute of Mathematics and Statistics, University of Tartu\\
Narva mnt 18, 51009 Tartu, Estonia\\
\email{viktor.abramov@ut.ee}\\
}

\end{document}